 \newtheorem{thm}{Theorem}[section]
 \newtheorem{cor}[thm]{Corollary}
 \newtheorem{prop}[thm]{Proposition}
 \theoremstyle{definition}
 \newtheorem{defn}[thm]{Definition}
 \theoremstyle{remark}
 \newtheorem{rem}[thm]{Remark}
 \numberwithin{equation}{section}
\def\Z{{\mathbb Z}_2}
\def\ZZ{{\mathbb Z}_2 \otimes {\mathbb Z}_2}
\def\half{\frac{1}{2}}
\def\g{\mathfrak{g}}
\begin{document}

%-------------------------------------------------------------------------
% editorial commands: to be inserted by the editorial office
%
%\firstpage{1} \volume{228} \Copyrightyear{2004} \DOI{003-0001}
%
%
%\seriesextra{Just an add-on}
%\seriesextraline{This is the Concrete Title of this Book\br H.E. R and S.T.C. W, Eds.}
%
% for journals:
%
%\firstpage{1}
%\issuenumber{1}
%\Volumeandyear{1 (2004)}
%\Copyrightyear{2004}
%\DOI{003-xxxx-y}
%\Signet
%\commby{inhouse}
%\submitted{March 14, 2003}
%\received{March 16, 2000}
%\revised{June 1, 2000}
%\accepted{July 22, 2000}
%
%
%
%---------------------------------------------------------------------------
%Insert here the title, affiliations and abstract:
%

\title[Generalization of Superalgebras to Color Superalgebras]
 {Generalization of Superalgebras to Color Superalgebras and Their Representations}

%----------Author 1
\author[N. Aizawa]{Naruhiko Aizawa}

\address{%
Department of Physical Science \\
Graduate School of Science \\
Osaka Prefecture University \\
Nakamozu Campus, Sakai, Osaka 599-8531 \\
Japan}

\email{aizawa@p.s.osakafu-u.ac.jp}

%\thanks{This work was completed with the support of our
%\TeX-pert.}
%----------Author 2
%\author{A Second Author}
%\address{The address of\br
%the second author\br
%sitting somewhere\br
%in the world}
%\email{dont@know.who.knows}
%----------classification, keywords, date
\subjclass{Primary 17B75; Secondary 15A66}

\keywords{Color superalgebras, Clifford algebras, Lie superalgebras,  boson-fermion systems.}

\date{January 1, 2004}
%----------additions
%\dedicatory{To my boss}
%%% ----------------------------------------------------------------------

\begin{abstract}
For a given Lie superalgebra, two ways of constructing color superalgebras are presented. 
One of them is based on the color superalgebraic nature of the Clifford algebras. 
The method is applicable to any Lie superalgebras and results in color superalgebra of $ {\mathbb Z}_2^{\otimes N} $ grading.  
The other is  discussed with an example, a superalgebra of boson and fermion operators. 
By treating the boson operators as ``second" fermionic sector we obtain a color superalgebra of $\ZZ$ grading. 
A vector field representation of the color superalgebra obtaind from the boson-fermion system is also presented. 
\end{abstract}

%%% ----------------------------------------------------------------------
\maketitle
%%% ----------------------------------------------------------------------
%\tableofcontents
\section{Introduction}

We discuss, in this work, two ways of construction of color superalgebras starting from a given Lie superalgebra. 
One of them is a generalization of the method given in \cite{rw2} and the another 
is an application of the author's previous works to a simple example \cite{aktt1,aktt2,AiSe}. 
As a consequence we have a novel families of $\Z^{\otimes N}$ graded and $ \ZZ$ graded  color superalgebras. 

Color superalgebras are a generalization of Lie superalgebras introduced by Rittenberg and Wyler \cite{rw1,rw2} (see also \cite{sch,GrJa}). 
The idea of generalization is to extend the $\Z$ graded structure of the underlying vector space of Lie superalgebra to more general abelian groups 
such as $ \ZZ, {\mathbb Z}_n \otimes {\mathbb Z}_n \otimes {\mathbb Z}_n, $  etc. During the last four decades many works on mathematical aspects of color (super)algebras focused on structure, classification, representations and so on  have been done. See, for example, \cite{sch2,sch3,Sil,ChSiVO,CART,SigSil,AiSe,StoVDJ} (and references therein). 
However, physical applications of the algebraic structures are very limited compared with the fundamental importance of Lie superalgebras in theoretical and mathematical physics. 
First attempt of physical application was made in  \cite{lr}. 
The authors try  to unify the spacetime and internal symmetries (color of quarks) by using $ \mathbb{Z}_3^{\otimes 3} $ graded color superalgebras and  the name ``color (super)algebras" seems to be after this work. 
Then the color (super)algebras appear in the context of de Sitter supergravity \cite{vas}, generalization of quasi-spin formalism \cite{jyw}, string theory \cite{zhe}, generalization of spacetime symmetries \cite{Toro1,Toro2,tol2} and para-statistics \cite{tol}. Despite their mathematical interest, we are not in a position to emphasis physical importance of color (super)algebras. 

Recent investigation of symmetries of differential equations reveals that a $\ZZ$ graded color superalgebra gives  symmetries of physically important equation called 
L\'evy-Leblond equations \cite{aktt1,aktt2}. 
The L\'evy-Leblond equation is a non-relativistic wave equation for a free particle with spin 1/2 \cite{LLE}. It is in a sense ``square root" of Schr\"odinger equation for a free particle and corresponds to a non-relativistic limit of the Dirac equation. 
There are some similarity between the L\'evy-Leblond and Dirac equations: (i) they are formulated by using the Clifford algebra in arbitrary dimensional spacetime, (ii) statistical interpretation of the wavefunctions is possible, (iii) when coupled to static electromagnetic field it gives the gyromagnetic ratio $g=2$ (better than the Pauli equation which gives $ g=1$).  In \cite{aktt1,aktt2} the L\'evy-Leblond equations in one, two and three dimensional space are investigated and the $\ZZ$ graded color superalgebra symmetry is observed for all cases. 

The observation that symmetry of such a fundamental equation of physics is described by a color superalgebra may suggest that color (super)algebras are more ubiquitous in physics. This motivate us to study color (super)algebras more seriously. 
We try to show, in this work, that how color superalgebras are closely related to ordinary Lie superalgebras. More precisely, it is easy to construct  color superalgebras starting from Lie superalgebras. One way presented in this paper is to take a tensor product of a Clifford algebra and a superalgebra.  
The other way is regarding even elements of Lie superalgebra as odd ones. 
This is just happened in the symmetries of L\'evy-Leblond equations \cite{aktt1,aktt2}.

This paper is organized as follows. 
In \S \ref{SEC:DEF} we give a definition of the color (super)algebras of $ \Z^{\otimes N}$ grading. 
We discuss $ Z^{\otimes N}$  graded structures of the Clifford algebras in \S \ref{SEC:CLZN}. 
It is shown that the Clifford  algebra $Cl(p,q)$ exemplifies both  the color superalgebra and the color algebra. 
This fact is used in \S \ref{SEC:LSA2CS} to construct a color superalgebra from a given Lie superalgebra. 
In \S \ref{SEC:BFCS} we provide an another method of converting a Lie superalgebra to a $\ZZ$ graded color superalgebra. 
This is done for a quite simple but physically fundamental Lie superalgebra, i.e., boson-fermion systems.  
In \S \ref{SEC:VecRep}  we discuss a representation of the color superalgebra, denoted by $\mathfrak{bf},$ introduced in \S \ref{SEC:BFCS}. 
It is a straightforward extension of the vector field representation of the Lie algebra. Namely, we consider a space of functions on the color supergroup generated by $\mathfrak{bf}$ and an action of $\mathfrak{bf}$ on the space of function. Then we obtain a realization of $\mathfrak{bf}$ in terms of differential operators. We summarize the results and give some remarks in \S \ref{SEC:CR}.

\section{Definition of color superalgebras} \label{SEC:DEF}

We give a definition of $ \Z^{\otimes N}$ graded ($N$ is a positive integer) color superalgebras and color Lie algebras. The abelian group $ \Z^{\otimes N}$ is the only grading considered in the present work. See \cite{sch} for more general setting. 

Let $\g$ be a vector space over $ \mathbb{C} $ or $ \mathbb{R} $ which is a direct sum of $2^N$ subspaces labelled by an element of the group $ \Z^{\otimes N}:$
\begin{equation}
  \g = \bigoplus_{\bm{\alpha} \in  \Z^{\otimes N}} \g_{\bm{\alpha}}. 
  \label{vecsp}
\end{equation}
We call an element $ \bm{\alpha} = (\alpha_1, \alpha_2, \dots, \alpha_N) \in \Z^{\otimes N} $ of $ \Z^{\otimes N}$ a grading vector and define an inner product of two grading vectors as usual:
\begin{equation}
  \bm{\alpha} \cdot \bm{\beta} = \sum_{i=1}^N \alpha_i \beta_i.
\end{equation}

\begin{defn} \label{DEF:CSA}
 If $\g$ admits a bilinear form $ \llbracket \ , \ \rrbracket : \g \times \g \to \g $ satisfying the following three relations, then $\g$ is called a $ \Z^{\otimes N}$ graded color superalgebra:
  \begin{enumerate}
     \item $ \llbracket \g_{\bm{\alpha}}, \g_{\bm{\beta}} \rrbracket \subseteq \g_{\bm{\alpha}+\bm{\beta}}, $
     \item $ \llbracket X_{\bm{\alpha}}, X_{\bm{\beta}} \rrbracket 
            = -(-1)^{\bm{\alpha}\cdot\bm{\beta}} \,
             \llbracket X_{\bm{\beta}}, X_{\bm{\alpha}} \rrbracket, $ 
     \item $ \llbracket X_{\bm{\alpha}}, \llbracket X_{\bm{\beta}}, X_{\bm{\gamma}} \rrbracket \rrbracket (-1)^{\bm{\alpha}\cdot\bm{\gamma}}
            + \text{cyclic perm.} = 0,
            $
  \end{enumerate}
  where $ X_{\bm{\alpha}} \in \g_{\bm{\alpha}} $ and the third relation is called the graded Jacobi identity. 
\end{defn}
When the inner product $ \bm{\alpha} \cdot \bm{\beta} $ is an even integer the graded Lie bracket $ \llbracket \ , \ \rrbracket $ is understood as a commutator, while it is an anticommutator if  $ \bm{\alpha} \cdot \bm{\beta} $ is an odd integer.  
If $N=1,$ then Definition \ref{DEF:CSA} is identical to the definition of Lie superalgebras so that the color superalgebra is a natural generalization of Lie superalgebra.

If $N=2M$ is an even integer, one may introduce a $ \Z^{\otimes 2M}$ graded color Lie algebra (we call it color algebra, too). We write the components of the grading vector as
\begin{equation}
  \bm{\alpha} = (\alpha_{1,1}, \alpha_{1,2}, \alpha_{2,1},\alpha_{2,2}, \dots,
  \alpha_{M,1}, \alpha_{M,2}) \in \Z^{\otimes 2M}
  \label{grveceven}
\end{equation}
and define an integer $(\bm{\alpha},\bm{\beta})$ by
\begin{equation}
   (\bm{\alpha},\bm{\beta}) = \sum_{k=1}^M \det 
   \begin{pmatrix}
      \alpha_{k,1} & \alpha_{k,2} \\
      \beta_{k,1} & \beta_{k,2}
   \end{pmatrix}.
\end{equation}

\begin{defn}
Let $\g$ be a vector space \eqref{vecsp} with $ N=2M.$ 
If $\g$ admits a bilinear form $ \llbracket \ , \ \rrbracket : \g \times \g \to \g $ satisfying the relations same as in Definition \ref{DEF:CSA} provided that $ \bm{\alpha} \cdot \bm{\beta} $ is replaced with $ (\bm{\alpha},\bm{\beta}), $ 
then $\g$ is called a $ \Z^{\otimes 2M}$ graded color Lie algebra.
\end{defn}
The color algebras are also determined by commutator (for even $ (\bm{\alpha},\bm{\beta})$) and anticommutators (for odd $ (\bm{\alpha},\bm{\beta})$).

%%%%%%%%%%%%%%%%%%%%%%%%%%%%%%%%%%%%%%%%%%%%%%%%%%%%%%%%%%%%
%
\section{$\Z^{\otimes N}$ graded structures of Clifford algebras} \label{SEC:CLZN}

The Clifford algebra $Cl(p,q)$ is generated by $ \gamma_i\; (i=1,2,\dots, N=p+q)$ subjected to the relations:
\begin{equation}
  \{ \gamma_i, \gamma_j \} = 2 \eta_{ij}, \qquad 
  \eta = \text{diag}(\underbrace{+1, \dots, +1}_{p}, \underbrace{-1, \dots, -1}_{q})
  \label{CLdefRel}
\end{equation}
The $Cl(p,q)$ is a $2^N$ dimensional unital algebra whose elements are given by a product of the generators:
\[
 1, \ \gamma_i, \ \gamma_i \gamma_j, \ \gamma_i \gamma_j \gamma_k,\ \dots, \ \gamma_1 \gamma_2 \cdots \gamma_N 
\]

 We show that the Clifford algebras give an example of color (super)algebras of $\Z^{\otimes N}$ grading. 
Let $ \bm{\alpha} = (\alpha_1, \alpha_2, \dots, \alpha_N) \in \Z^{\otimes N} $ 
be a grading vector of a color (super)algebra and 
$ \alpha_{i_1}, \alpha_{i_2},\dots, \alpha_{i_r} $ be its non-zero entries. 
We associate an element $\gamma_{\bm{\alpha}} $ of $Cl(p,q)$ with the grading vector $\bm{\alpha}$ by
\begin{equation}
  \gamma_{\bm{\alpha}} = \gamma_{i_1} \gamma_{i_2} \cdots \gamma_{i_r}. 
  \label{association}
\end{equation}
To illustrate the association we give the case of $N=3$ as an example:
\[
   \begin{array}{lclclcl}
    (0,0,0) & & 1 & \qquad & (1,1,0) & & \gamma_1 \gamma_2 \\
    (1,0,0) & & \gamma_1  && (1,0,1) && \gamma_1 \gamma_3 \\
    (0,1,0) & & \gamma_2 && (0,1,1) && \gamma_2 \gamma_3 \\
    (0,0,1) & & \gamma_3 && (1,1,1) && \gamma_1 \gamma_2 \gamma_3
   \end{array}
\]
\begin{prop}
  The association \eqref{association} defines the $\Z^{\otimes N}$ graded color superalgebra structure on $ Cl(p,q) $ for all possible pair  $(p,q).$  
It also defines the $\Z^{\otimes N}$ graded color algebra structure on $Cl(p,q)$  if $ N = p+q$ is an even integer. 
\end{prop}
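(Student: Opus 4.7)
The plan is to take $\g_{\bm{\alpha}}=\mathbb{C}\,\gamma_{\bm{\alpha}}$ with $\gamma_{\bm{\alpha}}$ as in \eqref{association}. Since the $2^N$ monomials listed just after \eqref{CLdefRel} form a basis of $Cl(p,q)$, the vector-space decomposition \eqref{vecsp} is automatic, so only the three conditions of Definition~\ref{DEF:CSA} need to be checked. Everything rests on a single multiplicative identity that I would establish first: rewriting $\gamma_{\bm{\alpha}}\gamma_{\bm{\beta}} = \gamma_{i_1}\cdots\gamma_{i_r}\gamma_{j_1}\cdots\gamma_{j_s}$ in canonical form via the Clifford relations \eqref{CLdefRel}, each interchange of distinct generators $\gamma_i,\gamma_j$ contributes a factor $-1$, while each collision contributes a factor $\eta_{ii}$. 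The surviving indices are precisely the symmetric difference of the two supports, which equals the support of $\bm{\alpha}+\bm{\beta}$ in $\Z^{\otimes N}$, so
\[
  \gamma_{\bm{\alpha}}\,\gamma_{\bm{\beta}} = \epsilon(\bm{\alpha},\bm{\beta})\,\gamma_{\bm{\alpha}+\bm{\beta}}
\]
for a nonzero scalar $\epsilon(\bm{\alpha},\bm{\beta})$.

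Given this identity, I define the bracket by $\llbracket X_{\bm{\alpha}},X_{\bm{\beta}}\rrbracket = X_{\bm{\alpha}}X_{\bm{\beta}} - (-1)^{\bm{\alpha}\cdot\bm{\beta}}X_{\bm{\beta}}X_{\bm{\alpha}}$, which matches the commutator or anticommutator prescription of the remark after Definition~\ref{DEF:CSA}. Condition~(1) is immediate since both summands lie in $\g_{\bm{\alpha}+\bm{\beta}}$, and condition~(2) is a one-line rearrangement using $(-1)^{2\bm{\alpha}\cdot\bm{\beta}}=1$. For the graded Jacobi identity~(3), I would fall back on the standard observation that this graded commutator turns \emph{any} $\Z^{\otimes N}$-graded associative algebra into a color superalgebra: expanding each of the three cyclic double brackets into four triple products yields twelve signed monomials distributed among the six orderings of $X_{\bm{\alpha}},X_{\bm{\beta}},X_{\bm{\gamma}}$, and each ordering appears in exactly two of them with opposite signs, which cancel. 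The only properties of the pairing used are associativity in $Cl(p,q)$, bilinearity of $\bm{\alpha}\cdot\bm{\beta}$, and the fact that $(-1)^{\bm{\alpha}\cdot\bm{\beta}}$ is invariant under swap.

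For the color algebra assertion when $N=2M$, I repeat the scheme with the bracket $\llbracket X_{\bm{\alpha}},X_{\bm{\beta}}\rrbracket = X_{\bm{\alpha}}X_{\bm{\beta}} - (-1)^{(\bm{\alpha},\bm{\beta})}X_{\bm{\beta}}X_{\bm{\alpha}}$. Conditions~(1) and~(2) are formally identical, and the Jacobi computation goes through without change because it needs only bilinearity of $(\bm{\alpha},\bm{\beta})$ and the identity $(-1)^{(\bm{\beta},\bm{\alpha})}=(-1)^{(\bm{\alpha},\bm{\beta})}$, which follows from $(\bm{\beta},\bm{\alpha})=-(\bm{\alpha},\bm{\beta})\equiv(\bm{\alpha},\bm{\beta})\pmod 2$. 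The main obstacle is really just the bookkeeping in the first step---keeping precise account of the transposition signs and the $\eta_{ii}$ factors that arise when the supports of $\bm{\alpha}$ and $\bm{\beta}$ overlap; once the multiplicative identity above is in hand, neither the superalgebra nor the color-algebra axioms use any further feature specific to $Cl(p,q)$, and the two statements fall out of the same formal check.
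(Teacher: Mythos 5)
Your proof is correct, but it reaches the graded Jacobi identity by a genuinely different route than the paper. The paper's proof leans on two facts about $Cl(p,q)$: the product rule $\gamma_{\bm{\alpha}}\gamma_{\bm{\beta}}=\kappa(\bm{\alpha},\bm{\beta})\gamma_{\bm{\alpha}+\bm{\beta}}$ (your $\epsilon$) \emph{and} the explicit reordering sign $\gamma_{\bm{\alpha}}\gamma_{\bm{\beta}}=(-1)^{\bm{\alpha}\cdot\bm{\beta}+\sigma(\bm{\alpha})\sigma(\bm{\beta})}\gamma_{\bm{\beta}}\gamma_{\bm{\alpha}}$, where $\sigma(\bm{\alpha})$ counts the nonzero entries of $\bm{\alpha}$; it then evaluates each of the three cyclic double brackets as an explicit multiple of $\gamma_{\bm{\alpha}}\gamma_{\bm{\beta}}\gamma_{\bm{\gamma}}$ and checks that the coefficients sum to zero using $\sigma(\bm{\alpha}+\bm{\beta})\equiv\sigma(\bm{\alpha})+\sigma(\bm{\beta})\pmod 2$. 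You bypass the reordering relation entirely: closure already follows from the product rule alone, and for the Jacobi identity you invoke the general fact that the bracket $xy-(-1)^{\langle\bm{\alpha},\bm{\beta}\rangle}yx$ on \emph{any} graded associative algebra satisfies the graded Jacobi identity whenever the sign factor is symmetric and bimultiplicative --- the twelve-term cancellation you describe does check out for both $\bm{\alpha}\cdot\bm{\beta}$ and $(\bm{\alpha},\bm{\beta})$ (the latter being symmetric mod $2$ precisely as you note). Your argument is shorter, less error-prone, and applies verbatim beyond Clifford algebras; what the paper's explicit computation buys is the concrete closure formula $(1-(-1)^{\sigma(\bm{\alpha})\sigma(\bm{\beta})})\kappa(\bm{\alpha},\bm{\beta})\gamma_{\bm{\alpha}+\bm{\beta}}$, which shows exactly which brackets vanish, and the reordering relation itself, which the paper reuses in the proof of its Theorem 4.1. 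Both proofs are complete for the proposition as stated.
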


\begin{proof}
The key of the proof is the relation
\begin{equation}
  \gamma_{\bm{\alpha}} \gamma_{\bm{\beta}} 
  =(-1)^{ \bm{\alpha} \cdot \bm{\beta}+ \sigma(\bm{\alpha}) \sigma(\bm{\beta}) } \gamma_{\bm{\beta}} \gamma_{\bm{\alpha}}, 
  \label{CLcomm}
\end{equation}
where 
\begin{equation}
   \sigma(\bm{\alpha}) = \sum_{k=1}^{N} \alpha_k  \label{sigma-alpha}
\end{equation}  
is the number of non-zero entries of $\bm{\alpha}. $ 
The relation \eqref{CLcomm} is easily verified by noting that the inner product $ \bm{\alpha}\cdot \bm{\beta} $ is equal to the number of common generators in $ \gamma_{\bm{\alpha}} $ and $ \gamma_{\bm{\beta}}. $ 
If $ \gamma_{\bm{\alpha}} = \gamma_{i_1} \gamma_{i_2} \cdots \gamma_{i_r} $ and $ \gamma_{\bm{\beta}} = \gamma_{j_1} \gamma_{j_2} \cdots \gamma_{j_s} $ share $m$ common generators $ \gamma_{\ell_1}, \gamma_{\ell_2}, \dots, \gamma_{\ell_m}$, then the relative sign of $ \gamma_{\bm{\alpha}}\gamma_{\bm{\beta}} $ and $ \gamma_{\bm{\beta}} \gamma_{\bm{\alpha}} $ is $ (-1)^{rs-m} $ which gives the relation \eqref{CLcomm}.  

 It is also obvious from \eqref{CLdefRel} and $\Z^{\otimes N}$ nature of the grading vector that we have the relation
\begin{equation}
   \gamma_{\bm{\alpha}} \gamma_{\bm{\beta}} = \kappa(\bm{\alpha},\bm{\beta}) \gamma_{\bm{\alpha}+\bm{\beta}}, 
   \label{Clproduct}
\end{equation}
where $ \kappa(\bm{\alpha},\bm{\beta}) = \pm 1 $ determined by $\bm{\alpha},\bm{\beta} $ and $ p, q$ (for the sake of simplicity of notations $p,q$ dependence is not indicated).  

  Firstly, we prove the color superalgebra structure of $ Cl(p,q). $ 
Closure of the $\Z^{\otimes N}$ graded Lie bracket is easily seen from \eqref{CLcomm} and \eqref{Clproduct}:
\begin{align}
  \llbracket \gamma_{\bm{\alpha}}, \gamma_{\bm{\beta}} \rrbracket 
  &= \gamma_{\bm{\alpha}} \gamma_{\bm{\beta}} - (-1)^{\bm{\alpha}\cdot\bm{\beta}} \gamma_{\bm{\beta}} \gamma_{\bm{\alpha}} 
  \nonumber \\
  &= (1-(-1)^{\sigma(\bm{\alpha}) \sigma(\bm{\beta})} ) \gamma_{\bm{\alpha}} \gamma_{\bm{\beta}}
  \nonumber \\
  &= (1-(-1)^{\sigma(\bm{\alpha}) \sigma(\bm{\beta})} ) \kappa(\bm{\alpha},\bm{\beta}) \gamma_{\bm{\alpha}+\bm{\beta}}. 
  \label{CLclosure}
\end{align}
The graded Jacobi identities are also verified by the straightforward computation. Using \eqref{CLcomm} and the second equality in \eqref{CLclosure}   one may see the identities
\begin{align*}
   \llbracket \gamma_{\bm{\alpha}}, \llbracket \gamma_{\bm{\beta}}, \gamma_{\bm{\gamma}} \rrbracket \rrbracket 
   (-1)^{\bm{\alpha} \cdot \bm{\gamma}}
   &= f(\bm{\alpha}, \bm{\beta},\bm{\gamma}) 
   (-1)^{\bm{\alpha} \cdot \bm{\gamma}} \gamma_{\bm{\alpha}} 
   \gamma_{\bm{\beta}} \gamma_{\bm{\gamma}},
    \\   
   \llbracket \gamma_{\bm{\beta}}, \llbracket \gamma_{\bm{\gamma}}, \gamma_{\bm{\alpha}} \rrbracket \rrbracket 
   (-1)^{\bm{\beta} \cdot \bm{\alpha}}
   &= f(\bm{\beta}, \bm{\gamma},\bm{\alpha}) 
   (-1)^{\bm{\alpha} \cdot \bm{\gamma} + \sigma(\bm{\alpha}) \sigma(\bm{\beta}+\bm{\gamma}) }
   \gamma_{\bm{\alpha}} \gamma_{\bm{\beta}} \gamma_{\bm{\gamma}},
   \\
   \llbracket \gamma_{\bm{\gamma}}, \llbracket \gamma_{\bm{\alpha}}, \gamma_{\bm{\beta}} \rrbracket \rrbracket 
   (-1)^{\bm{\gamma} \cdot \bm{\beta}}
   &= f(\bm{\gamma}, \bm{\alpha},\bm{\beta}) 
   (-1)^{\bm{\alpha} \cdot \bm{\gamma} + \sigma(\bm{\gamma}) \sigma(\bm{\alpha}+\bm{\beta}) }
   \gamma_{\bm{\alpha}} \gamma_{\bm{\beta}} \gamma_{\bm{\gamma}},
\end{align*}
with
\[
   f(\bm{\alpha}, \bm{\beta},\bm{\gamma}) = 
   (1-(-1)^{\sigma(\bm{\alpha}) \sigma(\bm{\beta}+\bm{\gamma})} )
  (1-(-1)^{\sigma(\bm{\beta}) \sigma(\bm{\gamma})}).
\]
Sum of them vanishes due to 
$ \sigma(\bm{\alpha}+\bm{\beta}) = \sigma(\bm{\alpha}) + \sigma(\bm{\beta}) \ \pmod 2. $ 
This completes the proof of a color superalgebra structure of $Cl(p,q).$ 

Secondly, we prove the color algebra structure of $Cl(p,q)$ for even $N=2M.$ 
In this case the components of the grading vector is denoted as in \eqref{grveceven}. We introduce some more notations:
\begin{align*}
   \bm{\alpha}_k &= (\alpha_{k,1}, \alpha_{k,2}), 
   \\
   \sigma(\bm{\alpha}_k) &= \alpha_{k,1} + \alpha_{k,2},
   \\
   \bm{\alpha}_k \cdot \bm{\beta}_k &= \alpha_{k,1} \beta_{k,1} + \alpha_{k,2} \beta_{k,2},
   \\
   \bm{\alpha}_k \times \bm{\beta}_k &= \alpha_{k,1} \beta_{k,2} - \alpha_{k,2} \beta_{k,1}.
\end{align*}
Then it is not difficult to see the identity
\begin{equation}
   \bm{\alpha}_k \times \bm{\beta}_k =  \bm{\alpha}_k \cdot \bm{\beta}_k 
   + \sigma(\bm{\alpha}_k) \sigma(\bm{\beta}_k) \pmod 2
\end{equation}
It follows that
\begin{equation}
   (\bm{\alpha},\bm{\beta}) = \bm{\alpha} \cdot \bm{\beta} 
   + \sum_{k=1}^M \sigma(\bm{\alpha}_k) \sigma(\bm{\beta}_k) \pmod 2
   \label{maprel}
\end{equation}
Together with \eqref{Clproduct} one may verify the closure of the graded Lie bracket:
\begin{align}
  \llbracket \gamma_{\bm{\alpha}}, \gamma_{\bm{\beta}} \rrbracket 
  &= \gamma_{\bm{\alpha}} \gamma_{\bm{\beta}} - (-1)^{(\bm{\alpha},\bm{\beta})} \gamma_{\bm{\beta}} \gamma_{\bm{\alpha}} 
  \nonumber \\
  &= (1-(-1)^{\sigma(\bm{\alpha}) \sigma(\bm{\beta}) + \sum_{k=1}^M \sigma(\bm{\alpha}_k) \sigma(\bm{\beta}_k)} ) \gamma_{\bm{\alpha}} \gamma_{\bm{\beta}}
  \nonumber \\
  &= (1-(-1)^{\sigma(\bm{\alpha}) \sigma(\bm{\beta})+ \sum_{k=1}^M \sigma(\bm{\alpha}_k) \sigma(\bm{\beta}_k)} ) \kappa(\bm{\alpha},\bm{\beta}) \gamma_{\bm{\alpha}+\bm{\beta}}. 
  \label{CLclosure2}
\end{align}
To verify the graded Jacobi identities we note the following relations which are  obtained using \eqref{CLcomm} and \eqref{CLclosure2}:
\begin{align*}
   \llbracket \gamma_{\bm{\alpha}}, \llbracket \gamma_{\bm{\beta}}, \gamma_{\bm{\gamma}} \rrbracket \rrbracket 
   (-1)^{(\bm{\alpha}, \bm{\gamma})}
   &= h(\bm{\alpha}, \bm{\beta},\bm{\gamma}) 
   (-1)^{\bm{\alpha} \cdot \bm{\gamma} + \sum_k \sigma(\bm{\alpha}_k) \sigma(\bm{\gamma}_k)} \, \gamma_{\bm{\alpha}} 
   \gamma_{\bm{\beta}} \gamma_{\bm{\gamma}},
    \\   
   \llbracket \gamma_{\bm{\beta}}, \llbracket \gamma_{\bm{\gamma}}, \gamma_{\bm{\alpha}} \rrbracket \rrbracket 
   (-1)^{(\bm{\beta}, \bm{\alpha})}
   &= h(\bm{\beta}, \bm{\gamma},\bm{\alpha}) 
   (-1)^{\bm{\alpha} \cdot \bm{\gamma} + \sigma(\bm{\alpha}) \sigma(\bm{\beta}+\bm{\gamma}) + \sum_k \sigma(\bm{\beta}_k) \sigma(\bm{\alpha}_k) }
   \\ 
   & 
    \times \gamma_{\bm{\alpha}} \gamma_{\bm{\beta}} \gamma_{\bm{\gamma}},
   \\
   \llbracket \gamma_{\bm{\gamma}}, \llbracket \gamma_{\bm{\alpha}}, \gamma_{\bm{\beta}} \rrbracket \rrbracket 
   (-1)^{(\bm{\gamma}, \bm{\beta})}
   &= h(\bm{\gamma}, \bm{\alpha},\bm{\beta}) 
   (-1)^{\bm{\alpha} \cdot \bm{\gamma} + \sigma(\bm{\gamma}) \sigma(\bm{\alpha}+\bm{\beta}) + \sum_k \sigma(\bm{\gamma}_k) \sigma(\bm{\beta}_k)}
   \\
   & \times 
   \gamma_{\bm{\alpha}} \gamma_{\bm{\beta}} \gamma_{\bm{\gamma}},
\end{align*}
with
\begin{align*}
   h(\bm{\alpha}, \bm{\beta},\bm{\gamma}) &= 
   (1-(-1)^{\sigma(\bm{\alpha}) \sigma(\bm{\beta}+\bm{\gamma})
   + \sum_k \sigma(\bm{\alpha}_k) \sigma(\bm{\beta}_k+\bm{\gamma}_k)} )
   \\
  & \times 
  (1-(-1)^{\sigma(\bm{\beta}) \sigma(\bm{\gamma}) + \sum_k \sigma(\bm{\beta}_k) \sigma(\bm{\gamma}_k)}).
\end{align*}
One may easily see that sum of the relations vanishes identically. 
This completes the proof of a  $\Z^{\otimes 2M}$ color algebra structure of $Cl(p,q).$ 
\end{proof}

\section{From a Lie superalgebra to a $\Z^{\otimes N}$ graded color superalgebra} 
\label{SEC:LSA2CS}

The authors of  \cite{rw2} give a method constructing a $\ZZ$ graded color superalgebra from a given Lie superalgebra (see also \cite{rw1,lr}). 
They consider a tensor product of Pauli matrices and a superalgebra and 
as a consequence dimension of the obtained $\ZZ$ graded color superalgebra is double of the original Lie superalgebra. 

 In this section we generalize the construction of \cite{rw2} to a color superalgebra of $\Z^{\otimes N}$ grading. This is done with the $\Z^{\otimes N}$ graded structure of $Cl(p,q)$ discussed in \S \ref{SEC:CLZN}. 
 Let us consider a Lie superalgebra $\g$ defined by the relations:
\begin{align}
   [T_m, T_n] &= C_{mn}^{k} T_k,  \qquad
   [T_m, Q_{\mu} ] = C_{m\mu}^{\nu} Q_{\nu},
   \nonumber \\
   \{ Q_{\mu}, Q_{\nu} \} &= C_{\mu \nu}^{m} T_m. 
   \label{SLAeg}
\end{align}
Let $ \bm{\alpha} \in \Z^{\otimes N} $ be a grading vector and $ \sigma(\bm{\alpha}) $ be an integer defined in \eqref{sigma-alpha}. 
Further let $ \gamma_{\bm{\alpha}}  $ be an element of $ Cl(p,q) $ defined by \eqref{association}. 
A tensor product of $\g$ and $Cl(p,q)$ gives a  $\Z^{\otimes N}$ color superalgebra.

\begin{thm} \label{THM1}
We set
\begin{align}
   X_{\bm{\alpha},m} &= \gamma_{\bm{\alpha}} \otimes T_m,
   \quad (\sigma(\bm{\alpha}) \ \text{even})
   \nonumber \\
   X_{\bm{\alpha},\mu} &= \gamma_{\bm{\alpha}} \otimes Q_{\mu}, 
   \quad (\sigma(\bm{\alpha}) \ \text{odd})
   \label{Z2toZN}
\end{align}
then $ X_{\bm{\alpha},m}, X_{\bm{\alpha},\mu} $ are a basis of a $\Z^{\otimes N}$ color superalgebra. The defining relations of the color superalgebra are given with the structure constants same as the ones of $\g$:
\begin{align}
   [ X_{\bm{\alpha}, a}, X_{\bm{\beta}, b} ] &= \kappa(\bm{\alpha}, \bm{\beta}) C_{ab}^c X_{\bm{\alpha}+\bm{\beta},c},
   \qquad (\bm{\alpha}\cdot\bm{\beta} \ \text{even}),
   \nonumber \\
   \{ X_{\bm{\alpha}, a}, X_{\bm{\beta}, b} \} &= \kappa(\bm{\alpha}, \bm{\beta}) C_{ab}^c X_{\bm{\alpha}+\bm{\beta},c},
   \qquad (\bm{\alpha}\cdot\bm{\beta} \ \text{odd}).
   \label{ZNextendedCom}
\end{align} 
The dimension of the color super algebra is $ 2^{N-1}$ times of $\dim \g.$  
\end{thm}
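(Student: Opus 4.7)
The plan is to view $\g \otimes Cl(p,q)$ as a $\Z^{\otimes N}$-graded vector space, with the $\bm{\alpha}$-component spanned by the $X_{\bm{\alpha},a}$ of \eqref{Z2toZN}, and to derive everything by direct computation using the color superalgebra structure of $Cl(p,q)$ already established in the previous section. The two essential facts are the commutation relation \eqref{CLcomm} and the product formula \eqref{Clproduct}. The crucial observation that makes the construction work is that, by the way the generators are assigned, $\sigma(\bm{\alpha})$ is even precisely when the partner element $Y_a \in \{T_m, Q_{\mu}\}$ is even in $\g$; that is, the parity induced on $\Z^{\otimes N}$ by $\sigma$ matches the original $\Z$ parity of $\g$.

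First I compute the bracket directly on the generators. Using the ungraded tensor-product multiplication $(\gamma_{\bm{\alpha}} \otimes Y_a)(\gamma_{\bm{\beta}} \otimes Y_b) = \gamma_{\bm{\alpha}}\gamma_{\bm{\beta}} \otimes Y_a Y_b$, the color-super bracket becomes
\[
\llbracket X_{\bm{\alpha},a}, X_{\bm{\beta},b} \rrbracket = \gamma_{\bm{\alpha}}\gamma_{\bm{\beta}} \otimes \bigl( Y_a Y_b - (-1)^{\sigma(\bm{\alpha})\sigma(\bm{\beta})} Y_b Y_a \bigr),
\]
since the $(-1)^{\bm{\alpha}\cdot\bm{\beta}}$ demanded by the color-super bracket combines with the Clifford sign $(-1)^{\bm{\alpha}\cdot\bm{\beta}+\sigma(\bm{\alpha})\sigma(\bm{\beta})}$ of \eqref{CLcomm} to leave only the parity sign. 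That inner expression is precisely the Lie super bracket in $\g$, giving $C_{ab}^{c} Y_c$; applying \eqref{Clproduct} then yields the defining relations \eqref{ZNextendedCom} and, in particular, closure on the graded subspaces. Graded antisymmetry is automatic from the definition of the bracket. The dimension count is also immediate: half of the $2^N$ grading sectors have $\sigma(\bm{\alpha})$ even and carry a copy of the even part of $\g$, the other half carry a copy of the odd part, producing a total dimension of $2^{N-1}\dim\g$.

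The genuine work is the graded Jacobi identity. Iterating the bracket formula gives
\[
\llbracket X_{\bm{\alpha},a}, \llbracket X_{\bm{\beta},b}, X_{\bm{\gamma},c}\rrbracket\rrbracket = \kappa(\bm{\beta},\bm{\gamma})\,\kappa(\bm{\alpha}, \bm{\beta}+\bm{\gamma})\, \gamma_{\bm{\alpha}+\bm{\beta}+\bm{\gamma}} \otimes \llbracket Y_a, \llbracket Y_b, Y_c\rrbracket\rrbracket_{\g},
\]
and the two cyclic counterparts differ from this only by a repositioning of one $\gamma$-factor inside the product $\gamma_{\bm{\alpha}}\gamma_{\bm{\beta}}\gamma_{\bm{\gamma}}$. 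Using \eqref{CLcomm}, these rearrangements produce extra signs of the form $(-1)^{\sigma(\bm{\alpha})\sigma(\bm{\beta}+\bm{\gamma})}$ and $(-1)^{\sigma(\bm{\gamma})\sigma(\bm{\alpha}+\bm{\beta})}$ on top of dot-product signs that combine cleanly with the weights $(-1)^{\bm{\alpha}\cdot\bm{\gamma}}$ appearing in the color-super Jacobi identity. After factoring out the common sign and the (associatively invariant) $\kappa$-product $\kappa(\bm{\alpha},\bm{\beta})\kappa(\bm{\alpha}+\bm{\beta},\bm{\gamma})$, what remains of the cyclic sum is exactly the graded Jacobi identity for $\g$ evaluated on $Y_a, Y_b, Y_c$, which vanishes by hypothesis because $\sigma(\bm{\alpha}) \equiv |Y_a| \pmod 2$ and similarly for $\bm{\beta},\bm{\gamma}$. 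The main obstacle is this sign bookkeeping; all structural content otherwise follows at once from the previously established color superalgebra structure of $Cl(p,q)$.
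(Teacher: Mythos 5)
Your proposal is correct and follows essentially the same route as the paper: both exploit the sign identity \eqref{CLcomm} together with the parity matching $\sigma(\bm{\alpha})\equiv|Y_a| \pmod 2$ to collapse the $\Z^{\otimes N}$-graded bracket of $\gamma_{\bm{\alpha}}\otimes Y_a$ into the tensor product of a Clifford monomial with the $\mathbb{Z}_2$-graded bracket of $\g$, and then reduce the colored Jacobi identity to the super Jacobi identity of $\g$ by the same sign bookkeeping. The dimension count via the even/odd split of the $2^N$ grading sectors also matches the paper's conclusion.
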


\begin{proof}
 In order to simplify the computation we introduce a new notation for the Lie superalgebra $\g.$ Its element is denoted by $ \mathcal{T}_a$ and  the parity of $ \mathcal{T}_a $ by $ |a|. $ The $ \mathbb{Z}_2 $ graded Lie bracket is denoted by $ [\ , \ ]_{\pm},$ then the relations in \eqref{SLAeg} are written as
\begin{equation}
    [ \mathcal{T}_a, \mathcal{T}_b ]_{\pm} = C_{ab}^c \mathcal{T}_c. 
    \label{Z2comm}
\end{equation}
Note that, for an element $ X_{\bm{\alpha},a} = \gamma_{\bm{\alpha}} \otimes \mathcal{T}_a, $  
the parity of $ \sigma(\bm{\alpha}) $  is equal to the parity of $ \mathcal{T}_a. $  
Therefore the $\Z^{\otimes N}$ graded bracket for $ X_{\bm{\alpha},a}$ is converted to the $ \mathbb{Z}_2 $ graded bracket for $\mathcal{T}_a:$
\begin{align}
   \llbracket X_{\bm{\alpha},a}, X_{\bm{\beta},b} \rrbracket 
   &= 
   \gamma_{\bm{\alpha}} \gamma_{\bm{\beta}} \otimes \mathcal{T}_a \mathcal{T}_b
   - (-1)^{\bm{\alpha} \cdot \bm{\beta}} \gamma_{\bm{\beta}} \gamma_{\bm{\alpha}} \otimes \mathcal{T}_b \mathcal{T}_a
   \nonumber \\
   &\overset{\eqref{CLcomm}}{=} 
   \gamma_{\bm{\alpha}} \gamma_{\bm{\beta}} \otimes 
    (\mathcal{T}_a \mathcal{T}_b   - (-1)^{|a||b|} \mathcal{T}_b \mathcal{T}_a)
   \nonumber  \\
   &\overset{\eqref{Z2comm}}{=}
   \gamma_{\bm{\alpha}} \gamma_{\bm{\beta}} \otimes 
    C_{ab}^c \mathcal{T}_c
   \nonumber \\
   &\overset{\eqref{Clproduct}}{=} 
   \kappa(\bm{\alpha}, \bm{\beta})  C_{ab}^c X_{\bm{\alpha+\beta},c}.
   \label{ZNCommu}
\end{align}
This is the relations identical to the ones in  \eqref{ZNextendedCom}. 

Due to the similar computation, the $\Z^{\otimes N}$ graded Jacobi identities are reduced to the $ \mathbb{Z}_2 $ graded ones. 
Using the second equality of \eqref{ZNCommu} one may verifies the relation:
\begin{align*}
  &\llbracket X_{\bm{\alpha},a}, \llbracket X_{\bm{\beta},b}, X_{\bm{\gamma},c} \rrbracket \rrbracket
  \\
  &= \gamma_{\bm{\alpha}} \gamma_{\bm{\beta}} \gamma_{\bm{\gamma}} \otimes
  \mathcal{T}_a [ \mathcal{T}_b, \mathcal{T}_c ]_{\pm} 
  - (-1)^{ \bm{\alpha} \cdot (\bm{\beta}+\bm{\gamma}) }
  \gamma_{\bm{\beta}} \gamma_{\bm{\gamma}} \gamma_{\bm{\alpha}} \otimes 
  [\mathcal{T}_b, \mathcal{T}_c]_{\pm} \mathcal{T}_a
  \\
  &\overset{\eqref{CLcomm}}{=} \gamma_{\bm{\alpha}} \gamma_{\bm{\beta}} \gamma_{\bm{\gamma}} \otimes
  [ \mathcal{T}_a, [\mathcal{T}_b, \mathcal{T}_c]_{\pm} ]_{\pm}.
\end{align*}
It follows that
\begin{align*}
  & \llbracket X_{\bm{\alpha},a}, \llbracket X_{\bm{\beta},b}, X_{\bm{\gamma},c} \rrbracket \rrbracket (-1)^{\bm{\alpha} \cdot \bm{\beta}} 
   + \text{cyclic perm.}
   \\
   &= (-1)^{\bm{\alpha} \cdot \bm{\gamma} + |a||c|} 
   \gamma_{\bm{\alpha}} \gamma_{\bm{\beta}} \gamma_{\bm{\gamma}} \otimes 
   ([ \mathcal{T}_a, [\mathcal{T}_b, \mathcal{T}_c]_{\pm} ]_{\pm} (-1)^{|a| |c|}
   + \text{cyclic perm.})
   \\
   &= 0. 
\end{align*}
The last equality is due to the $ \mathbb{Z}_2 $ graded Jacobi identity of $\g.$ 
\end{proof}

Theorem \ref{THM1} shows that any representation of $\g$ is lifted up to a representation of the $\Z^{\otimes N}$ graded color superalgebra.  
The construction \eqref{Z2toZN} implies that one may reverse the procedure. 
It is possible to construct a Lie superalgebra from any $\Z^{\otimes N}$ graded color superalgebra.
\begin{cor}
Let $ X_{\bm{\alpha},a}$ be a basis of a $\Z^{\otimes N}$ graded color superalgebra. 
Then $ T_{\bm{\alpha},a} = \gamma_{\bm{\alpha}} \otimes X_{\bm{\alpha},a} $ is a basis of Lie superalgebra. 
The parity of $ T_{\bm{\alpha},a} $ is equal to the parity of $ \sigma(\bm{\alpha}). $ 
\end{cor}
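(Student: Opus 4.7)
My plan is to mirror the argument used for Theorem \ref{THM1}, but running the reduction between graded brackets in the opposite direction. The essential observation is the same: the Clifford commutation relation \eqref{CLcomm} introduces exactly the sign factor needed to convert between the $\Z^{\otimes N}$ and $\Z$ gradings.

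First I would fix a Clifford algebra $Cl(p,q)$ with $N=p+q$ (the particular signature is irrelevant since only the commutation structure matters) and assign to each basis element the $\Z$-parity $|T_{\bm{\alpha},a}| = \sigma(\bm{\alpha}) \pmod 2$. I would then check closure by computing the would-be Lie-super bracket directly. For $T_{\bm{\alpha},a} = \gamma_{\bm{\alpha}} \otimes X_{\bm{\alpha},a}$ and $T_{\bm{\beta},b} = \gamma_{\bm{\beta}} \otimes X_{\bm{\beta},b}$, one has
\begin{align*}
  [T_{\bm{\alpha},a}, T_{\bm{\beta},b}]_\pm
  &= \gamma_{\bm{\alpha}}\gamma_{\bm{\beta}} \otimes X_{\bm{\alpha},a} X_{\bm{\beta},b}
   - (-1)^{\sigma(\bm{\alpha})\sigma(\bm{\beta})}\,\gamma_{\bm{\beta}}\gamma_{\bm{\alpha}} \otimes X_{\bm{\beta},b} X_{\bm{\alpha},a}.
\end{align*}
Applying \eqref{CLcomm} to rewrite $\gamma_{\bm{\beta}}\gamma_{\bm{\alpha}} = (-1)^{\bm{\alpha}\cdot\bm{\beta} + \sigma(\bm{\alpha})\sigma(\bm{\beta})}\gamma_{\bm{\alpha}}\gamma_{\bm{\beta}}$, the two Clifford factors align and the parity sign $(-1)^{\sigma(\bm{\alpha})\sigma(\bm{\beta})}$ is cancelled exactly, leaving
\[
  [T_{\bm{\alpha},a}, T_{\bm{\beta},b}]_\pm
   = \gamma_{\bm{\alpha}}\gamma_{\bm{\beta}} \otimes \llbracket X_{\bm{\alpha},a}, X_{\bm{\beta},b} \rrbracket.
\]
Then \eqref{Clproduct} rewrites $\gamma_{\bm{\alpha}}\gamma_{\bm{\beta}} = \kappa(\bm{\alpha},\bm{\beta})\gamma_{\bm{\alpha}+\bm{\beta}}$, so the right-hand side is a linear combination of $T_{\bm{\alpha}+\bm{\beta},c}$'s with structure constants proportional to those of the color superalgebra. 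Closure and the parity statement $|T_{\bm{\alpha},a}| = \sigma(\bm{\alpha})$ follow immediately.

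For the graded Jacobi identity, I would apply the same trick to a double bracket, pushing all Clifford generators to the left until every summand carries the common factor $\gamma_{\bm{\alpha}}\gamma_{\bm{\beta}}\gamma_{\bm{\gamma}}$. Exactly as in the proof of Theorem \ref{THM1}, the signs produced by the three cyclic rearrangements of $\gamma_{\bm{\alpha}}\gamma_{\bm{\beta}}\gamma_{\bm{\gamma}}$ combine with the Clifford swap signs from \eqref{CLcomm} so that the cyclic sum reduces to
\[
  \gamma_{\bm{\alpha}}\gamma_{\bm{\beta}}\gamma_{\bm{\gamma}} \otimes
  \bigl(\llbracket X_{\bm{\alpha},a},\llbracket X_{\bm{\beta},b}, X_{\bm{\gamma},c}\rrbracket\rrbracket (-1)^{\bm{\alpha}\cdot\bm{\gamma}} + \text{cyclic perm.}\bigr),
\]
which vanishes by the $\Z^{\otimes N}$ graded Jacobi identity satisfied by the $X$'s.

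The only subtle point I anticipate is keeping the bookkeeping of signs straight in the double-bracket computation: one must be careful that the parity assignment $|T_{\bm{\alpha},a}| = \sigma(\bm{\alpha}) \pmod 2$ is used consistently in every application of $[\,,\,]_\pm$, and that the Clifford factor arising from $\gamma_{\bm{\beta}}\gamma_{\bm{\gamma}}$ moving past $\gamma_{\bm{\alpha}}$ matches the $\bm{\alpha}\cdot(\bm{\beta}+\bm{\gamma})$ sign coming from the color-graded Jacobi weight. Once the alignment is verified on the bracket level as above, this is automatic. No nontrivial obstacle beyond these sign manipulations is expected, since the construction is essentially the inverse of the one in Theorem \ref{THM1}.
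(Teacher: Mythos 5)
Your proposal is correct and follows exactly the route the paper intends: the paper omits the proof as ``easy,'' and the natural argument is precisely your reversal of the proof of Theorem \ref{THM1}, using \eqref{CLcomm} to cancel the parity sign $(-1)^{\sigma(\bm{\alpha})\sigma(\bm{\beta})}$ against the Clifford swap sign and leave the color-graded bracket $\llbracket\,,\,\rrbracket$ on the second tensor factor. I checked the sign bookkeeping in the Jacobi step: the cyclic sum reduces to a common overall factor $(-1)^{\bm{\alpha}\cdot\bm{\gamma}+\sigma(\bm{\alpha})\sigma(\bm{\gamma})}\gamma_{\bm{\alpha}}\gamma_{\bm{\beta}}\gamma_{\bm{\gamma}}$ times the color-graded Jacobi identity, so your claimed cancellation indeed goes through.
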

Proof of the Corollary is easy so we omit it. 
Of course, the Lie superalgebra generated by $T_{\bm{\alpha},a} $ is not isomorphic the Lie superalgebra $\g$ defined in \eqref{SLAeg}.

%%%%%%%%%%%%%%%%%%%%%%%%%%%%%%%%%%%%%%%%%%%%%%%%%%%%%%%%%%%%%%%%%%%%%
\section{$\ZZ$ generalization of boson-fermion systems} \label{SEC:BFCS}

In this section we discuss a construction of $\ZZ$ graded color superalgebra 
from a given Lie superalgebra. The construction is different from the one discussed in \S \ref{SEC:LSA2CS} so that the resulted color superalgebras have, in general, different dimension from the one in Theorem \ref{THM1}. 
The simplest case of the examples given  in this section has a subalgebra which is isomorphic to  the color superalgebra given in the first paper by Rittenberg and Wyler \cite{rw1}.

The simplest Lie superalgebra would be a system consisting of independent bosons and fermions. Let $ a_i, a_i^{\dagger} $ be boson annihilation, creation operators and $ \alpha_i, \alpha_i^{\dagger}$ be the fermionic ones. 
We also introduce an idempotent operator $F$ anticommuting with fermions. 
The boson-fermion superalgebra is a $\Z$ graded algebra with the assignment of degree 
\begin{equation}
    (0) \;:\: 1,\ a_i, \ a_i^{\dagger}
    \qquad
    (1) \;:\; \alpha_i, \ \alpha_i^{\dagger}, \ F
\end{equation}
and defined by the relations:
\begin{align}
         &  [a_i, a_j^{\dagger}] = \delta_{i,j}, \qquad \{ \alpha_i, \alpha_j \} = \delta_{i,j}, 
         \qquad \{F, F\} = 2,  \label{BFrel1} \\
         &  [F, a_i] = [F, a_i^{\dagger}] = \{ F, \alpha_i \} = \{ F, \alpha_i^{\dagger} \} = 0. \label{BFrel2}
\end{align}

One should not underestimate the importance of this superalgebra. 
The boson-fermion systems are found in physics literatures on various fields 
and they are basic building blocks of representations of Lie superalgebras. 

We now construct a color superalgebra with $\ZZ$ graded structure 
by the idea same as \cite{aktt1,aktt2,AiSe}, that is, by changing the grading of the boson-fermion systems. We treat the bosons  as ``fermionic" elements and consider the following assignment of $\ZZ$ degree
\begin{equation}
  \begin{array}{ccl}
        (0,0) &: & 1, \ A_{ij}, \ A_{ij}^{\dagger}, \ N_{ij} \\
        (1,0) &: & \alpha_i, \ \alpha_i^{\dagger},\ \beta_i, \ \beta_i^{\dagger} \\
        (0,1) &: & a_i, \ a_i^{\dagger} \\
        (1,1) &: & F  
  \end{array} 
  \label{BFZ2elements}
\end{equation}
The $\ZZ$ degree of $ a_i, a_i^{\dagger} $ and $ F$ forces us to introduce new elements, at least the following ones:
\begin{align}
  A_{ij} &= \half \{a_i, a_j \}, \quad A_{ij}^{\dagger} = \half \{ a_i^{\dagger}, a_j^{\dagger} \}, \quad 
  N_{ij} = \half \{a_i^{\dagger}, a_j \},
  \\
  \beta_{i} &= \half \{ a_i, F\},\quad \beta_{i}^{\dagger} = \half \{ a_i^{\dagger}, F\}.
\end{align}
From the viewpoint of the original $\mathbb{Z}_2$ graded  boson-fermion systems, now we are working on the universal enveloping algebra of the boson-fermion superalgebra. 
With the aid of the relations (\ref{BFrel1}) and (\ref{BFrel2}) one may verify the relations given below which respect the $\ZZ$ grading. It is an interesting observation that the elements given in (\ref{BFZ2elements}) close in the $\ZZ$ (anti)commutation relations without introducing any more higher order elements.  

\medskip
\noindent
$ (0,0)$-$(0,0)$ sector:
\begin{align}
 [A_{ij}, A_{k\ell}^{\dagger}] &= \delta_{jk} N_{\ell i} + \delta_{i\ell} N_{kj} + \delta_{ik} N_{\ell j} + \delta_{j\ell} N_{ki},
 \label{BFdef1}
 \\
 [A_{ij}, N_{k\ell}] &= \delta_{ik} A_{j\ell} + \delta_{jk} A_{i\ell},
 \\
 [A_{ij}^{\dagger}, N_{k\ell}] &= -\delta_{i\ell} A_{kj}^{\dagger} - \delta_{j\ell} A_{ik}^{\dagger},
 \\
 [N_{ij}, N_{k\ell}] &= \delta_{jk} N_{i\ell} - \delta_{i\ell} N_{kj}. 
\end{align}

\noindent
$ (0,0)$-$(1,0)$ sector:
\begin{align}
  [A_{ij}, \beta_k^{\dagger}] &= \delta_{ik} \beta_j + \delta_{jk} \beta_i,
  \\
  [A_{ij}^{\dagger}, \beta_k] &= - \delta_{ik} \beta_j^{\dagger}-\delta_{jk}\beta_i^{\dagger},
  \\
  [N_{ij}, \beta_k] &= -\delta_{ik} \beta_j,
  \\
  [N_{ij}, \beta_k^{\dagger}] &= \delta_{jk}\beta_i^{\dagger}.
\end{align}

\noindent
$ (0,0)$-$(0,1)$ sector:
\begin{align}
  [A_{ij}, a_k^{\dagger}] &= \delta_{ik} a_j + \delta_{jk} a_i,
  \\
  [A_{ij}^{\dagger}, a_k] &= -\delta_{ik} a_j^{\dagger} - \delta_{jk} a_i^{\dagger},
  \\
  [N_{ij}, a_k] &= -\delta_{ik} a_j,
  \\
  [N_{ij}, a_k^{\dagger}] &= \delta_{jk} a_i^{\dagger}.
\end{align}

\noindent
$ (1,0)$-$(1,0)$ sector:
\begin{align}
  \{ \alpha_i, \alpha_j^{\dagger} \} &= \delta_{ij}, 
  &\{ \beta_i, \beta_j \} = 2 A_{ij},
  \\
  \{ \beta_i, \beta_j^{\dagger} \} &= 2 N_{ji}, 
 & \{ \beta_i^{\dagger}, \beta_j^{\dagger} \} = 2 A_{ij}^{\dagger}.
\end{align}

\noindent
$ (1,0)$-$(0,1)$ sector:
\begin{equation}
   [\beta_i, a_j^{\dagger}] = \delta_{ij} F, 
   \qquad \qquad
   [\beta_i^{\dagger}, a_j] = -\delta_{ij} F.
\end{equation}

\noindent
$ (1,0)$-$(1,1)$ sector:
\begin{equation}
  \{ \beta_i, F \} = 2 a_i, \qquad \qquad 
  \{ \beta_i^{\dagger}, F\} = 2 a_i^{\dagger}
\end{equation}

\noindent
$ (0,1)$-$(0,1)$ sector:
\begin{equation}
   \{ a_i, a_j \} = 2A_{ij}, \qquad 
   \{ a_i, a_j^{\dagger} \} = 2 N_{ji},
   \qquad
   \{ a_i^{\dagger}, a_j^{\dagger} \} = 2A_{ij}^{\dagger}.
\end{equation}

\noindent
$ (0,1)$-$(1,1)$ sector:
\begin{equation}
  \{ a_i, F \} = 2\beta_i, \qquad \qquad
  \{ a_i^{\dagger}, F \} = 2 \beta_i^{\dagger}.
  \label{BFdef2}
\end{equation}
There exist no nonvanishing relations in $(0,0)$-$(1,1)$ sector. 

 One may verify by the straightforward computation that the  $\ZZ$ relations obtained above satisfy the graded Jacobi identity of the color superalgebra.  
We thus obtain a new color superalgebra with $\ZZ$ grading.
\begin{prop}
 The elements in \eqref{BFZ2elements} with the relations \eqref{BFdef1}-\eqref{BFdef2} defines a color superalgebra with $\ZZ$ grading structure. 
 Any representations of the boson-fermion system are able to promote to the 
 representations of the color superalgebra. 
\end{prop}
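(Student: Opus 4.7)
My approach is to realize the claimed color superalgebra as a graded subspace of an associative algebra whose graded commutator provides the color superalgebra structure automatically. Let $U$ be the universal enveloping algebra of the boson-fermion Lie superalgebra, and assign $\ZZ$-degrees to its generators by $\deg(a_i) = \deg(a_i^\dagger) = (0,1)$, $\deg(\alpha_i) = \deg(\alpha_i^\dagger) = (1,0)$, and $\deg(F) = (1,1)$. The first step is to verify that each defining relation \eqref{BFrel1}-\eqref{BFrel2} is homogeneous under this assignment --- for example $[a_i, a_j^\dagger] = \delta_{ij}$ sits in degree $(0,0)$ and $\{F, \alpha_i\} = 0$ in degree $(0,1)$ --- so that $U$ inherits the structure of a $\ZZ$-graded associative algebra $U = \bigoplus_{\bm{\alpha}} U_{\bm{\alpha}}$.

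The key general lemma, which I would state and sketch, is that any $\ZZ$-graded associative algebra $A$ carries a $\ZZ$-graded color superalgebra structure under the graded commutator
\[
\llbracket X, Y \rrbracket := XY - (-1)^{\bm{\alpha}\cdot\bm{\beta}} YX, \qquad X \in A_{\bm{\alpha}}, \ Y \in A_{\bm{\beta}}.
\]
Closure on homogeneous subspaces and graded antisymmetry are immediate from the definition, and the graded Jacobi identity follows by expanding each of the three cyclically permuted double brackets into four associative triple products and observing that the resulting twelve signed monomials cancel in pairs. This is the color-superalgebra analogue of the fact that every associative algebra is a Lie algebra under the commutator.

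Applying this lemma to $U$, the proof of the proposition reduces to showing that the linear span $V$ of the elements listed in \eqref{BFZ2elements} is closed under the graded commutator --- but this closure is exactly the content of the explicit relations \eqref{BFdef1}-\eqref{BFdef2}, which have already been obtained by direct computation from \eqref{BFrel1}-\eqref{BFrel2}. Graded antisymmetry and the graded Jacobi identity for $V$ are then inherited for free from $U$, with no further work required.

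The representation claim follows in the same spirit: any representation of the boson-fermion Lie superalgebra extends uniquely to an algebra representation of $U$ on the same space, and since every generator in \eqref{BFZ2elements} is an explicit polynomial in the original boson-fermion generators, restricting this extended action to $V$ yields a representation of the color superalgebra. I expect no real conceptual obstacle; the only labor-intensive step is verifying \eqref{BFdef1}-\eqref{BFdef2}, a pure bookkeeping exercise that the author has already carried out.
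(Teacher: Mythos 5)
Your proposal is correct, and it reaches the conclusion by a genuinely different route from the paper on the one point that requires an argument beyond bookkeeping. The paper works in the same universal enveloping algebra (it says so explicitly), obtains the relations \eqref{BFdef1}--\eqref{BFdef2} by direct computation from \eqref{BFrel1}--\eqref{BFrel2}, and then asserts that the graded Jacobi identity is checked ``by the straightforward computation'' on those relations --- i.e.\ a second case-by-case verification over the graded sectors. You instead isolate the general lemma that any $\ZZ$-graded associative algebra is a color superalgebra under $\llbracket X,Y \rrbracket = XY - (-1)^{\bm{\alpha}\cdot\bm{\beta}}YX$; this is sound because $(-1)^{\bm{\alpha}\cdot\bm{\beta}}$ is a commutation factor (symmetric and biadditive mod $2$), and the twelve-term cancellation you describe does go through. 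Your preliminary check that the ideal of relations \eqref{BFrel1}--\eqref{BFrel2} is homogeneous for the new grading, so that $U$ is genuinely $\ZZ$-graded, is exactly the point that makes the lemma applicable and is worth stating as you do. What your route buys is that the Jacobi identity costs nothing once closure of the span $V$ is established, and the same lemma explains related constructions elsewhere in the paper (e.g.\ the Clifford-algebra proposition); what the paper's route buys is self-containedness at the level of the displayed structure constants. One small point worth making explicit in your write-up: to transfer the Jacobi identity from $U$ to the abstractly presented algebra you should note that the elements of \eqref{BFZ2elements} are linearly independent in $U$ (clear from Poincar\'e--Birkhoff--Witt), so the structure constants are those read off from \eqref{BFdef1}--\eqref{BFdef2}. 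Your treatment of the representation claim coincides with the paper's intent.
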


\begin{rem}
 The boson commutation relations of $a_i, a_i^{\dagger}$  never hold true in the color superalgebra. $F$ is no longer idempotent in the color superalgebra, either.  
\end{rem}

\begin{rem} \label{REM:three}
 The fermion operators $\alpha_i, \alpha_i^{\dagger} $ are decoupled from the color superalgebra. This means that a color superalgebra is obtained from the boson subalgebra. The color superalgebra obtained from the single boson and $F$ is identical to the one in \cite{rw1} (see eq.(2.21)).
\end{rem}

%%%%%%%%%%%%%%%%%%%%%%%%%%%%%%%%%%%%%%%%%%%%%%%%%%%%%%%%%%%%
%
\section{Representations of the color superalgebra obtained from single-mode boson-fermion system} \label{SEC:VecRep}

We consider a vector field representations of the color superalgebra obtained in \S \ref{SEC:BFCS}. The purpose of this is to show that the color superalgebra has non-trivial representations beside the realization in the enveloping algebra of the boson-fermion systems. 
We concentrate on the simplest case, i.e., the color superalgebra obtained from the single-mode boson-fermion system.  In our convention the color superalgebra we investigate has the following basis:
\begin{align}
   & (0,0) \ : \ 1, \ A,\ A^{\dagger},\ N & & (1,0) \ : \alpha, \ \alpha^{\dagger}, \ \beta, \ \beta^{\dagger}
   \nonumber \\
   & (0,1) \ : \ a, \ a^{\dagger} & & (1,1) \ : \ F
   \label{BasisBF}
\end{align}
We denote the color superalgebra by $ \mathfrak{bf}. $ 
We decompose $ \mathfrak{bf} $ as follows:
\begin{align}
    \mathfrak{bf}_+ \quad &: \quad A^{\dagger}, \ a^{\dagger}, \ \alpha^{\dagger}, \ \beta^{\dagger}
    \nonumber \\
    \mathfrak{bf}_0 \quad &: \quad 1, \ N,\ F
    \nonumber \\
    \mathfrak{bf}_- \quad &: \quad A, \ a, \ \alpha, \ \beta
    \label{TriDeco}
\end{align}
This is a kind of triangular decomposition. Indeed, it corresponds to a triangular decomposition of the subalgebra mentioned in Remark \ref{REM:three}. It was done according to the eigenvalue of ad$ N.$ The $\mathfrak{bf} $ is a direct sum of the subalgebra and the fermion algebra generated by $ \alpha, \alpha^{\dagger}.$ Furthermore, $\mathfrak{bf}_0$ is hermitian and $\mathfrak{bf}_{\pm} $ are hermitian conjugate each other.  Thus \eqref{TriDeco} would be a natural decomposition of $\mathfrak{bf}. $ 

The color superalgebra $\mathfrak{bf}$ may generate a color supergroup by exponential mapping. Parameters of the group are given by a $ \ZZ $ extension of the Grassmann numbers defined in \cite{rw2}. The $ \ZZ $ extended Grassmann numbers $ \zeta_{\bm{\alpha}, i} \ (\bm{\alpha} \in \ZZ) $ are defined by the relations:
\begin{equation}
  \llbracket \zeta_{\bm{\alpha}, i}, \zeta_{\bm{\beta}, j}  \rrbracket = 0.
\end{equation}
As an corollary of Theorem \ref{THM1}, $ \zeta_{\bm{\alpha}, i} $ is realized in terms of the Grassmann number $\xi_{\mu} $ and the Clifford algebra $ Cl(p,q),\; p+q = 2:$
\begin{align*}
  \zeta_{(0,0),m} &= 1 \otimes x_m, \quad 
  \zeta_{(1,0),\mu} = \gamma_1 \otimes \xi_{\mu},
  \\
  \zeta_{(0,1),\mu} &= \gamma_2 \otimes \xi_{\mu},  \quad
  \zeta_{(1,1),m} = \gamma_1 \gamma_2 \otimes x_m,
\end{align*}
where $ x_m \in \mathbb{R}. $

 An element of the color supergroup is given as
\begin{equation}
  g_+ = \exp( x A^{\dagger}) \exp(\psi a^{\dagger}) \exp(\theta_1 \beta^{\dagger}) 
  \exp( \theta_2 \alpha^{\dagger}). 
  \label{groupelements}
\end{equation}
Here we have changed the notation for the $\ZZ$ extended Grassmann numbers form $ \zeta_{\bm{\alpha}} $ to more readable ones. The $ \ZZ $ degree of $x, \psi, \theta_i$ are obvious from \eqref{groupelements}. The element $ g_+$ is generated by $ \mathfrak{bf}_+. $ 
The elements $ g_-$ and $ g_0, $ which are generated by $ \mathfrak{bf}_-$ and $\mathfrak{bf}_0,$ are given in a similar way. 

Let us consider a space of functions on the color supergroup with the special property (see eg. \cite{Dob}):
\begin{equation}
     f(g_+ g_0 g_-) = f(g_+).  \label{RightCov}
\end{equation}
Namely, $f$ is a function on the coset $G/G_0G_-$ of the color supergroup.  
This is for the sake of simplicity of the final result. It is, in general, not necessary to consider such a special property. 
Without \eqref{RightCov} the final result contains more parameters. 

Derivative and integral for the $\ZZ$ extended Grassmann number are defined in \cite{AiSe}. The definition is same as the left derivative of Grassmann numbers. For example,
\[
  \frac{\partial}{\partial \theta_2} x \theta_1 \theta_2 = - x \theta_1, 
  \qquad
  \frac{\partial}{\partial \psi} \theta_1 \psi = \theta_1. 
\]
Now we are ready to define an action of $ \mathfrak{bf}$ on the space of functions. 
The definition is the same as the standard Lie theory \cite{AiSe}. 

\begin{defn}
  Let $ Y \in \mathfrak{bf}$ and $ \tau $ be a $\ZZ $ extended Grassmann number of the $\ZZ$ degree same as $Y.$ The left action of $Y$ on $ f(g_+) $ is defined by
  \begin{equation}
     Y f(g_+) = \left. \frac{d}{d \tau} f( e^{-\tau Y} g_+) \right|_{\tau = 0}
  \end{equation}
\end{defn}
It is not difficult to verify that the left action gives a realization of $ \mathfrak{bf}$ on the space of functions $f.$ 

\begin{prop}
 The vector field representation of $ \mathfrak{bf} $ is given as follows:
 \begin{align*}
    A^{\dagger} &= - \frac{\partial}{\partial x},
    \\
    a^{\dagger} &= -\frac{\partial}{\partial \psi} + \psi \frac{\partial}{\partial x},
    \\
    \alpha^{\dagger} &= - \frac{\partial}{\partial \theta_2},
    \\
    \end{align*}
    \begin{align*}
    \beta^{\dagger} &= -\frac{\partial}{\partial \theta_1} + \theta_1 \frac{\partial}{\partial x},
    \\
    N &= - 2x \frac{\partial}{\partial x} - \psi \frac{\partial}{\partial \psi} - \theta_1 \frac{\partial}{\partial\theta_1},
    \\
    F &= 2\theta_1 \frac{\partial}{\partial \psi} + 2 \psi \frac{\partial}{\partial \theta_1},
    \\
    A &= -4x\left( x \frac{\partial}{\partial x} + \psi \frac{\partial}{\partial \psi} + \theta_1 \frac{\partial}{\partial \theta_1} \right),
    \\
    a &= -2 x \frac{\partial}{\partial \psi} + 2 x \psi \frac{\partial}{\partial x} + 2 \psi \theta_1 \frac{\partial}{\partial \theta_1}, 
    \\
    \alpha &= - \theta_2,
    \\
    \beta &= 2x \theta_1 \frac{\partial}{\partial x} + 2 \psi \theta_1 \frac{\partial}{\partial \psi} - 2 x \frac{\partial}{\partial \theta_1}.
 \end{align*}
\end{prop}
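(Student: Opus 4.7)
The plan is to apply the definition $Yf(g_+) = \frac{d}{d\tau}f(e^{-\tau Y}g_+)|_{\tau=0}$ directly to each generator. For each $Y$ I rewrite $e^{-\tau Y}g_+$ in Gauss-decomposed form $g_+(x',\psi',\theta_1',\theta_2')\cdot g_0'(\tau)\cdot g_-'(\tau)$ and read off the vector field from the first-order parameter shifts in $g_+'$, plus, whenever the $G_0$-piece contains the scalar element $e^{\tau c\cdot 1}$, a multiplicative contribution via the standard induced-representation character on the center.

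For the raising sector $\mathfrak{bf}_+$: $A^\dagger$ (anti)commutes with each of $a^\dagger,\beta^\dagger,\alpha^\dagger$ (the relevant color brackets are unlisted and hence zero; equivalently $A^\dagger = (a^\dagger)^2$ in the enveloping algebra), so $e^{-\tau A^\dagger}$ shifts only $x \to x-\tau$ and gives $A^\dagger = -\partial_x$. For $a^\dagger$, the relation $\{a^\dagger,a^\dagger\}=2A^\dagger$ together with $\tau\psi=-\psi\tau$ yields $e^{-\tau a^\dagger}e^{\psi a^\dagger}=e^{(\psi-\tau)a^\dagger}e^{\tau\psi A^\dagger}$; merging the trailing $A^\dagger$-exponential into $e^{xA^\dagger}$ produces $x \to x+\tau\psi$, $\psi \to \psi-\tau$, hence $-\partial_\psi + \psi\partial_x$. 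The analogous BCH step gives $\beta^\dagger$, while for $\alpha^\dagger$ all relevant brackets vanish, leaving the simple shift $\theta_2 \to \theta_2 - \tau$.

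For the Cartan sector $\mathfrak{bf}_0$ conjugation is the cleanest tool. Reading off $[N,A^\dagger]=2A^\dagger$, $[N,a^\dagger]=a^\dagger$, $[N,\beta^\dagger]=\beta^\dagger$, $[N,\alpha^\dagger]=0$ from the closure relations, $e^{-\tau N}g_+ e^{\tau N}$ rescales $(x,\psi,\theta_1,\theta_2)$ by $(e^{-2\tau},e^{-\tau},e^{-\tau},1)$, and absorbing the trailing $e^{\tau N}\in G_0$ yields the weighted Euler operator displayed for $N$. The anticommutators $\{F,a^\dagger\}=2\beta^\dagger$ and $\{F,\beta^\dagger\}=2a^\dagger$ implement the swap $\psi\leftrightarrow\theta_1$ with an overall factor $2$, producing the displayed $F$.

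The lowering sector $\mathfrak{bf}_-$ is the technically heaviest part and the main obstacle. For $Y\in\{A,a,\beta\}$ one commutes $e^{-\tau Y}$ successively past each raising exponential, each commutation generating corrections in $\mathfrak{bf}_0+\mathfrak{bf}_-$ that must be re-normal-ordered until they settle on the right of a revised $g_+'$. Computing $A$, for example, requires the Gauss decomposition of the $\mathfrak{sl}(2)$-subalgebra $\{A,N,A^\dagger\}$ with $[A,A^\dagger]=4N$, whose first-order shift $x'(\tau)-x\sim 4\tau x^2 + \cdots$ produces the coefficient $-4x^2$ of $\partial_x$ in the displayed formula; the shifts in $\psi,\theta_1$ follow by the same decomposition coupled with the cross-relations of \S\ref{SEC:BFCS}. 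The weight-$0$ generator $\alpha$ contributes only multiplicatively: the single nontrivial relation $\{\alpha,\alpha^\dagger\}=1$ produces a scalar factor $e^{\tau\theta_2\cdot 1}$ in the Gauss decomposition of $e^{-\tau\alpha}e^{\theta_2\alpha^\dagger}$, which under the induced character yields $\alpha=-\theta_2$. Two features tame the entire computation: the $\ZZ$-odd parameters are nilpotent, truncating every BCH expansion at low order, and the triangular decomposition \eqref{TriDeco} guarantees closure of the iteration; the sole remaining labor is careful bookkeeping of color-graded signs, after which the displayed operators emerge.
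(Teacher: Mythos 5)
Your proposal is correct and follows exactly the route the paper intends (the paper itself states the Proposition without writing out the verification): apply the left-action definition to $g_+$, normal-order $e^{-\tau Y}g_+$ into the Gauss form $g_+'g_0'g_-'$ using the single-mode relations (nilpotency of the odd parameters truncating every BCH series), absorb the $G_0G_-$ factor via \eqref{RightCov} with the central element $1$ acting as a character (which is indeed the step needed to get $\alpha=-\theta_2$ rather than $0$), and differentiate at $\tau=0$. The only blemish is a sign slip in your description of the $A$ computation: conjugating $A$ by $e^{xA^\dagger}$ gives $A+4xN+4x^2A^\dagger$, so the first-order shift is $x'-x=-4\tau x^2$ (not $+4\tau x^2$), which is what actually produces the stated coefficient $-4x^2$ of $\partial_x$, with the $-4xN$ remainder supplying the $-4x\psi\,\partial_\psi-4x\theta_1\,\partial_{\theta_1}$ terms.
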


Generalizing the vector field representation to the multi-mode systems may not be difficult. 

%%%%%%%%%%%%%%%%%%%%%%%%%%%%%%%%%%%%%%%%%%%%%%%%%%%%%%%%%%%%%%%%%%%%%%%
%
\section{Concluding remarks} \label{SEC:CR}

An interesting observation of the present work is that the Clifford  algebras exemplify both color superalgebras and color algebras. 
This allows us to convert any Lie superalgebra to  a color  superalgebra and  vice versa. 
Thus any representation of the original Lie superalgebra, together with a representation of the Clifford algebra, is promoted to a representation of the color superalgebra. 
It seems to be difficult to convert a Lie algebra to a color algebra in a similar way.  
At least some simple trials did not work well, though we do not give a detailed discussion. 

  In the second part of this paper, we showed that the boson-fermion systems can be converted to color superalgebras by the method of  \cite{aktt1,aktt2,AiSe}. The method is more restrictive than the one using the Clifford algebras. However, it is the method by which we found a $\ZZ$ graded symmetry of the L\'evy-Leblond equations. 
It is known that symmetries of some Schr\"odinger equations (e.g.  for a free particle) are generated by the Lie algebra of the Shcr\"odinger group, then by switching the degree of some  generators from even (degree 0) to odd (degree 1)  we find a spectrum generating Lie superalgebra of the quantum mechanical Hamiltonian \cite{Toppan,aikuto}.     
This leads us to the idea that  if symmetries of a differential equation are generated by a Lie superalgebra, then there exists a $ \ZZ $ graded color (super)algebra relating to the differential equation. Therefore for a given physical system we  expect a existence of sequence: from a Lie algebra to a Lie superalgebra, from a Lie superalgebra to a $\ZZ$ graded color (super)algebra, from a $\ZZ$ color (super)algebra to $ \mathbb{Z}_2^{\otimes 3}$ color (super)algebra and so on. 
We think that this would be a good way to find physical application of the color (super)algebras. 
Namely, one may find a $\ZZ$ graded color (super)algebra by investigating a system having (dynamical) supersymmetry. 

 There is no need to say the importance of representation theory of any algebraic structures when their physical and mathematical applications are concerned. We presented a vector field representation of the color superalgebra $\mathfrak{bf}$ which is a straightforward extension of the Lie theory to the $\ZZ$ setting. However, if we consider a naive extension of Verma modules to the case of $\mathfrak{bf}$, we encounter some difficulties.  
To have a lowest/highest weight representations of $\mathfrak{bf}$ we need an modified approach. 
This will be reported elsewhere. 

% ------------------------------------------------------------------------

\subsection*{Acknowledgment}
N. A. is supported by the  grants-in-aid from JSPS (Contract No. 26400209).

% ------------------------------------------------------------------------
\end{document}